\newtheorem{thm}{Theorem}[section]
\theoremstyle{definition}
\theoremstyle{remark}
\title{A new approach to measurement in quantum tomography}
\author{Artur Czerwi{\'n}ski\\
E-mail: aczerwin@fizyka.umk.pl\\
  \small 1. Institute of Physics\\
  \small Nicolaus Copernicus University\\
  \small 87-100 Toru{\'n}\\
	\small 2. Center for Theoretical Physics \\
\small Polish Academy of Sciences \\
\small 02-668 Warszawa\\
}
\begin{document}
\maketitle

\abstract{In this article we propose a new approach to quantum measurement in reference to the stroboscopic tomography. Generally, in the stroboscopic approach it is assumed that the information about the quantum system is encoded in the mean values of certain hermitian operators $Q_1, ..., Q_r$ and each of them can be measured more than once. The main goal of the stroboscopic tomography is to determine when one can reconstruct the initial density matrix $\rho(0)$ on the basis of the measurement results $\langle Q_i \rangle _{t_j}$. In this paper we propose to treat every complex matrix as a measurable operator. This generalized approach to quantum measurement may bring some improvement into the models of stroboscopic tomography.
}
\section{Introduction}

In this paper by $\mathcal{H}$ we shall denote the Hilbert space and we shall assume that $dim \mathcal{H} = n < \infty $. By $B(\mathcal{H})$ we shall denote the complex vector space of all bounded linear operators in $\mathcal{H}$. The space $B(\mathcal{H})$ is isomorphic with the space of all complex matrices that shall be represented by $\mathbb{M}_n (\mathbb{C})$. Finally, $B_* (\mathcal{H})$ shall refer to the real vector space of all hermitian (self-adjoint) operators on $\mathcal{H}$. The elements of $B_* (\mathcal{H})$ shall be called observables.

The term quantum tomography refers to a wide variety of methods and approaches which aim to reconstruct the accurate representation of a quantum system by conducting a series a measurements. Among many different approaches to quantum tomography one can especially mention the so-called static model of tomography, which requires $n^2 - 1$ measurements of different observables taken at time instant $t=0$ (see more in \cite{altepeter04,alicki87,genki03}). A paper published in 2011 initiated another approach to quantum tomography which is based on weak measurement. The paper revealed that the wave function of a pure state can be measured in a direct way \cite{bamber}. Further papers proved that this approach can be generalized also for mixed state identification \cite{wu}.

In this paper we follow yet another approach to quantum tomography - the so-called stroboscopic tomography which originated in 1983 in the article \cite{jam83}. Subsequently, the approach was developed in other papers, such as \cite{jam00,jam04}. The assumption that is at the very foundation of this method claims that the evolution of an open quantum system can be expressed by a master equation of the form
\begin{equation}\label{eq:kossak}
\dot{\rho} = \mathbb{L} \rho,
\end{equation}
where the operator $\mathbb{L}$ is called the generator of evolution and its most general form have been introduced in \cite{gorini76}. In order to determine the initial density matrix $\rho(0)$ one assumes to have a set of identically prepared quantum systems which evolve according to the master equation with generator $\mathbb{L}$. Each system can be measured only once, because any measurement, generally, influences the state. 

The other underlying assumption connected with the stroboscopic approach is that the knowledge about the quantum system is provided by mean values of certain observables $\{Q_1, ..., Q_r\}$ (obviously $Q_i ^* = Q_i$) such that $r < n^2 -1$. These mean values are mathematically expressed as
\begin{equation}
\langle Q_i \rangle _t = Tr(Q_i \rho(t))
\end{equation}
and are assumed to be achievable from an experiment. If we additionally assume that the knowledge about the evolution enables us to perform measurements at different time instants $t_1, ..., t_g$, we get from an experiment a matrix of data $[\langle Q_i \rangle _{t_j}]$, where $i=1,...,r$ and $j=1,...,g$. The fundamental question of the stroboscopic tomography that one asks is whether the matrix of experimental data is sufficient to reconstruct the initial density matrix $\rho(0)$. Other problems relate to the minimal number of observables and time instants, the properties of the observables and the choice of time instants. In general the conditions under which it is possible to reconstruct the initial state have been determined and can be found in \cite{jam83,jam00,jam04}.

Compared with the static model of tomography, the stroboscopic approach makes it possible to decrease significantly the number of different observables that are necessary to perform quantum tomography. From experimantal point of view it means that in the static model one needs to preprare $n^2 -1$ different experimental systems (e.g. for $dim \mathcal{H} = 4$ one would need to measure $15$ different quantities), which seems rather unrealistic. Therefore, the stroboscopic approach appears to have an advantage over the static model as it aims to minimalize the number of different observables.

\section{Generalized observables and measurement results}

According to one of the most fundamental concepts of quantum mechanics to every physical quantity we can assign a hermitian operator which is called an observable. Thus when talking about measurements in the context of the stroboscopic tomography we consider mean values of certain hermitian operators \cite{jam83}. The main goal of this section is to prove that this approach to measurement can be generalized in such a way that any complex matrix $A\in \mathbb{M}_n (\mathbb{C})$ can be considered a measurable operator.

We propose the following theorem.
\begin{thm}(Hermitian decomposition of a complex matrix)\\
$\forall_{A \in  \mathbb{M}_n (\mathbb{C})}\text{ } \exists_{Q,R \in B_*(\mathcal{H})}$ such that the matrix $A$ can be decomposed as 
\begin{equation}
A = Q + i R.
\end{equation}
\end{thm}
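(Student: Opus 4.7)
The plan is to prove this by explicit construction: given any $A \in \mathbb{M}_n(\mathbb{C})$, I will write down two specific matrices built from $A$ and its conjugate transpose $A^*$, and then verify both that they are hermitian and that they sum in the required way to give $A$. This mirrors the familiar decomposition of a complex number $z = \operatorname{Re} z + i \operatorname{Im} z$, with $A^*$ playing the role of complex conjugation.

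Concretely, I would define
\begin{equation}
Q := \frac{1}{2}(A + A^*), \qquad R := \frac{1}{2i}(A - A^*),
\end{equation}
and then carry out three short verifications. First, $Q^* = Q$ follows because $(A + A^*)^* = A^* + A$ and the scalar $1/2$ is real. Second, $R^* = R$ follows because $(A - A^*)^* = A^* - A = -(A - A^*)$, and this extra minus sign is absorbed by conjugating the scalar $1/(2i) = -1/(2i)^*$; writing it out gives $R^* = \overline{(1/(2i))}(A-A^*)^* = (-1/(2i))(-(A-A^*)) = R$. Third, a direct addition yields
\begin{equation}
Q + iR = \frac{1}{2}(A + A^*) + \frac{i}{2i}(A - A^*) = \frac{1}{2}(A + A^*) + \frac{1}{2}(A - A^*) = A,
\end{equation}
which is the desired decomposition. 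Hence $Q, R \in B_*(\mathcal{H})$ exist for every $A$.

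I do not anticipate a genuine obstacle here: the argument is a purely algebraic identity, and the only point requiring a moment of care is the hermiticity of $R$, where one must track the factor $1/(2i)$ correctly under conjugation. The result is the operator analogue of splitting a complex number into its real and imaginary parts, and the uniqueness of the pair $(Q,R)$ (although not asserted in the statement) would follow immediately from the same formulas by solving the linear system $A = Q + iR$, $A^* = Q - iR$ for $Q$ and $R$.
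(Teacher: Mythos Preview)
Your proof is correct, and in fact you construct exactly the same pair $(Q,R)$ as the paper does: if you expand your formulas $Q=\tfrac{1}{2}(A+A^*)$ and $R=\tfrac{1}{2i}(A-A^*)$ entrywise, writing $a_{ij}=\operatorname{Re}a_{ij}+i\operatorname{Im}a_{ij}$ and $(A^*)_{ij}=\overline{a_{ji}}$, you recover precisely the paper's definitions of $q_{ij}$ and $r_{ij}$. The difference is purely one of presentation. The paper works coordinate by coordinate, defining each entry of $Q$ and $R$ in terms of the real and imaginary parts of $a_{ij}$ and $a_{ji}$, and then checks $\overline{q_{ij}}=q_{ji}$, $\overline{r_{ij}}=r_{ji}$, and $q_{ij}+ir_{ij}=a_{ij}$ by hand. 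Your argument stays at the operator level, using only the involutive property $(A^*)^*=A$ and the antilinearity of the adjoint, which makes the hermiticity checks and the recombination $Q+iR=A$ one-liners. Your route is the standard textbook version and is cleaner; the paper's entrywise computation is more explicit but carries no additional content.
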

\begin{proof}
Let us first denote $A=[a_{ij}]$ and since in general $a_{ij} \in \mathbb{C}$ we can put
\begin{equation}
a_{ij} = Re \text{ } a_{ij} + i Im \text{ } a_{ij}.
\end{equation}
Moreover we can denote $Q = [q_{ij}]$ and $R = [r_{ij}]$. Then we shall define the entries of the matrices $Q$ and $R$ in the way:
\begin{equation}
q_{ij} : = \frac{Re\text{ } a_{ij} + Re \text{ }a_{ji}}{2} + i \frac{Im \text{ }a_{ij} - Im \text{ } a_{ji}}{2},
\end{equation}
\begin{equation}
r_{ij} := \frac{Im \text{ } a_{ij} + Im \text{ } a_{ji}}{2} + i \frac{Re \text{ } a_{ji} - Re\text{ } a_{ij}}{2}.
\end{equation}
One can easily notice that $\overline{q_{ij}} = q_{ji}$ and $\overline{r_{ij}} = r_{ji}$. Therefore $Q,R \in B_*(\mathcal{H})$.

Furthermore, one can check that 
\begin{equation}
q_{ij} + i r_{ij} = \frac{Re\text{ } a_{ij} + Re \text{ }a_{ji}}{2} + i \frac{Im \text{ }a_{ij} - Im \text{ } a_{ji}}{2} + i \frac{Im \text{ } a_{ij} + Im \text{ } a_{ji}}{2} + \frac{Re \text{ } a_{ij} - Re\text{ } a_{ji}}{2} = a_{ij},
\end{equation}
which implies that 
\begin{equation}
A = Q + i R.
\end{equation}
\end{proof}

The above theorem states that every complex matrix $A\in \mathbb{M}_n (\mathbb{C})$ can be uniquely decomposed into two hermitian matrices. In other words, every complex matrix can be regarded as a pair of observables (hermitian matrices), i. e.
\begin{equation}
A \rightarrow (Q_1, Q_2), \text{  } Q_1,Q_2 \in B_* (\mathcal{H}).
\end{equation}

Since in general any observable is considered measurable, therefore, any complex matrix can also be considered a measurable operator.

In this paper it has been proven that for any $A\in \mathbb{M}_n (\mathbb{C})$ there exist two observables $Q_1,Q_2 \in B_* (\mathcal{H})$ such that 
\begin{equation}
A = Q_1 + i Q_2.
\end{equation}

If we generalize the idea of quantum measurement, we can define the mean value of the operator $A \in \mathbb{M}_n$ (and denote by $\langle A \rangle _t$) on a quantum system characterized by a density matrix $\rho (t) $ in the following way
\begin{equation}
\langle A \rangle _t := Tr [A \rho(t)] = Tr \left [ (Q_1 + i Q_2) \rho(t) \right ].
\end{equation}
Taking the advantage of the fact the \textit{trace} is linear one obtains
\begin{equation}
\langle A \rangle _t = Tr [Q_1 \rho(t) ] + i Tr [Q_2 \rho (t)],
\end{equation}
which can be equivalently presented as
\begin{equation}
\langle A \rangle _t = \langle Q_1 \rangle _t + i \langle Q_2 \rangle _t.
\end{equation}

One can observe that if we generalize the idea of quantum measurement in such a way that we treat any complex matrix $A\in \mathbb{M}_n (\mathbb{C})$ as a measurable operator, the mean value of $A$ is a complex number such that its real and imaginary parts are mean values of the observables $Q_1, Q_2$ that appear in the hermitian decomposition of $A$. Therefore, the measurement of any complex operator $A$ can be understood as the measurement of two physical quntities that are mathematically represented by two hermitian matrices $Q_1, Q_2$.

\section{Connection with the stroboscopic tomography}

When considering problems in the stroboscopic tomography, one needs to bear in mind the necessary condidion that the set of observables $Q_1, Q_2, ..., Q_r$ has to fulfill so that an open quantum system with dynamics given by \eqref{eq:kossak} will be reconstructible. 

\begin{thm}
An open quantum system which evolution is given by \eqref{eq:kossak} is $(Q_1,...Q_r)$-reconstructible if and only if the operators $Q_i$ satisfy the condition \cite{jam83,jam00}
\begin{equation}\label{eq:condition}
\bigoplus \limits_{i=0}^r K_\mu (\mathbb{L},Q_i) = B_*(\mathcal{H}),
\end{equation}
where $\bigoplus$ refers to the Minkowski sum, $\mu$ is the degree of the minimal polynomial of $\mathbb{L}$ and $K_\mu (\mathbb{L}, Q_i)$ denotes Krylov subspace which standard difinition reads
\begin{equation}
 K_\mu (\mathbb{L}, Q_i) := Span \{ Q_i, \mathbb{L}^* Q_i, (\mathbb{L}^*)^2 Q_i, ...,(\mathbb{L}^*)^{\mu-1} Q_i \}.
\end{equation}
\end{thm}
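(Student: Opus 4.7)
The plan is to convert the abstract notion of reconstructibility into a concrete linear-algebraic spanning condition on a family of hermitian operators. Solving \eqref{eq:kossak} one has $\rho(t)=\exp(t\mathbb{L})\rho(0)$, so each experimental datum can be rewritten as
\begin{equation}
\langle Q_i\rangle_{t_j}=Tr\bigl(Q_i\, e^{t_j\mathbb{L}}\rho(0)\bigr)=Tr\bigl((e^{t_j\mathbb{L}^*}Q_i)\,\rho(0)\bigr),
\end{equation}
where $\mathbb{L}^*$ denotes the Hilbert--Schmidt adjoint of $\mathbb{L}$. The first step is to recognize that the data matrix $[\langle Q_i\rangle_{t_j}]$ encodes exactly the Hilbert--Schmidt pairings of the unknown $\rho(0)\in B_*(\mathcal{H})$ against the family of hermitian operators $\{e^{t_j\mathbb{L}^*}Q_i\}$. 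Hence reconstructibility of $\rho(0)$ is equivalent to the existence of instants $t_1,\dots,t_g$ for which this family spans $B_*(\mathcal{H})$ over $\mathbb{R}$.

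Next I would invoke the Cayley--Hamilton-type reduction for analytic functions of $\mathbb{L}$: if $\mu$ is the degree of the minimal polynomial of $\mathbb{L}$, then for every $t\in\mathbb{R}$
\begin{equation}
e^{t\mathbb{L}^*}=\sum_{k=0}^{\mu-1}\alpha_k(t)\,(\mathbb{L}^*)^k,
\end{equation}
with analytic scalar coefficients $\alpha_k$. This representation yields the ``only if'' half at once: the span of $\{e^{t_j\mathbb{L}^*}Q_i\}_{i,j}$ always lies inside $\sum_{i=1}^r K_\mu(\mathbb{L},Q_i)$, so if the system is reconstructible then the Minkowski sum of Krylov subspaces must exhaust $B_*(\mathcal{H})$.

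For the converse I would argue that for $\mu$ generically chosen instants $t_1,\dots,t_\mu$ the $\mu\times\mu$ scalar matrix $[\alpha_k(t_j)]$ is invertible. The $\alpha_k$ are linearly independent functions of $t$: in the Jordan decomposition of $\mathbb{L}$ they correspond, up to a fixed change of basis, to the functions $t^m e^{\lambda t}$ attached to the distinct eigenvalues $\lambda$ of $\mathbb{L}$, so a Wronskian/Vandermonde argument supplies instants at which the evaluation matrix is invertible. Inverting it expresses each Krylov basis vector $(\mathbb{L}^*)^k Q_i$ as an explicit linear combination of the measured operators $e^{t_j\mathbb{L}^*}Q_i$, which establishes the reverse inclusion and completes the ``if'' direction.

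The main obstacle is this invertibility/genericity claim. One has to justify both that $\mu$ is the exact depth of the Krylov construction --- so that no $(\mathbb{L}^*)^k Q_i$ with $k\geq\mu$ produces anything new, a consequence of the minimal-polynomial relation --- and that the coefficient functions $\alpha_k$ stay linearly independent even when $\mathbb{L}$ has nontrivial Jordan blocks. A small bookkeeping point, easy but worth noting, is that the whole argument stays inside the real space $B_*(\mathcal{H})$: this holds because $\mathbb{L}$ (and hence $\mathbb{L}^*$) is hermiticity-preserving as a Kossakowski--Lindblad generator, so $(\mathbb{L}^*)^k Q_i\in B_*(\mathcal{H})$ whenever $Q_i\in B_*(\mathcal{H})$, and the real linear span coincides with the complex linear span intersected with $B_*(\mathcal{H})$.
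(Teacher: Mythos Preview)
The paper does not actually prove this theorem: it is stated as a known result with citations to \cite{jam83,jam00}, and no proof environment follows it in the text. So there is nothing in the present paper to compare your argument against; the theorem functions here purely as background for the subsequent discussion.

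That said, your outline is essentially the standard argument from the cited literature. Writing $\langle Q_i\rangle_{t_j}=Tr\bigl((e^{t_j\mathbb{L}^*}Q_i)\rho(0)\bigr)$ and reducing $e^{t\mathbb{L}^*}$ to a polynomial of degree $\mu-1$ in $\mathbb{L}^*$ via the minimal polynomial is exactly how the Krylov subspaces enter, and the genericity of time instants (nonvanishing of the generalized Wronskian of the coefficient functions $\alpha_k(t)$) is the classical mechanism for the converse. The points you flag as obstacles---linear independence of the $\alpha_k$ in the presence of Jordan blocks, and the hermiticity-preservation ensuring everything stays in $B_*(\mathcal{H})$---are indeed the places where care is required, but both are handled in \cite{jam83,jam00} along the lines you sketch. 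If you want to turn this into a self-contained proof, the cleanest route for the invertibility step is to note that the $\alpha_k(t)$ solve a linear ODE with constant coefficients whose characteristic polynomial is the minimal polynomial of $\mathbb{L}$, so their Wronskian is a nonzero multiple of an exponential and hence never vanishes identically; analyticity then gives you a dense set of admissible time tuples.
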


In reference to this condition for observability of a quantum system we can propose the following theorem.
\begin{thm}
Assume that the hermitian matrices $\{ \lambda_1, \lambda_2, ..., \lambda_{n^2}\}$ form a basis in the space of all hermitian operators $B_* (\mathcal{H})$, where $n = dim \mathcal{H}$. Then they also constitute a basis in the space of all linear operators $\mathbb{M}_n (\mathbb{C})$.
\end{thm}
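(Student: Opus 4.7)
The plan is to check directly that the hypothesised basis of $B_*(\mathcal{H})$ is both spanning and linearly independent over $\mathbb{C}$ when regarded inside $\mathbb{M}_n (\mathbb{C})$. A preliminary remark is that the real dimension of $B_*(\mathcal{H})$ equals $n^2$, which matches the complex dimension of $\mathbb{M}_n (\mathbb{C})$; since $n^2$ vectors are supplied, formally verifying either condition would be enough, but I would spell out both steps for transparency.

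For the spanning, I would invoke the hermitian decomposition from the previous theorem. Given an arbitrary $A \in \mathbb{M}_n (\mathbb{C})$, write $A = Q + iR$ with $Q,R \in B_*(\mathcal{H})$. By hypothesis $\{\lambda_i\}$ is a basis of $B_*(\mathcal{H})$ over $\mathbb{R}$, so one can expand $Q = \sum_i q_i \lambda_i$ and $R = \sum_i r_i \lambda_i$ with real coefficients $q_i, r_i$. Collecting terms yields
\begin{equation}
A = \sum_{i=1}^{n^2} (q_i + i r_i)\, \lambda_i,
\end{equation}
exhibiting $A$ as a complex linear combination of the $\lambda_i$.

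For independence over $\mathbb{C}$, I would suppose $\sum_i c_i \lambda_i = 0$ with $c_i = a_i + i b_i$ and $a_i, b_i \in \mathbb{R}$, and rewrite the identity as $\sum_i a_i \lambda_i + i \sum_i b_i \lambda_i = 0$. Both sums then lie in $B_*(\mathcal{H})$. Taking the hermitian conjugate of this identity (or, equivalently, appealing to uniqueness of the hermitian decomposition of the zero matrix) separates the two real combinations and forces each to vanish. The assumed real linear independence of the $\lambda_i$ in $B_*(\mathcal{H})$ then gives $a_i = b_i = 0$ for every $i$, hence $c_i = 0$.

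The delicate point, more a subtlety to flag than a genuine obstacle, is that the hypothesis only supplies linear independence of $\{\lambda_i\}$ over $\mathbb{R}$, while the conclusion asks for independence over the strictly larger field $\mathbb{C}$. The argument above bridges this by splitting complex coefficients into their real and imaginary parts and exploiting hermiticity to decouple the two components, exactly the same device that underlies Theorem 2.1.
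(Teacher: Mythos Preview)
Your proof is correct and follows essentially the same route as the paper: both use the hermitian decomposition $A=Q+iR$ together with the real expansions of $Q$ and $R$ in the $\lambda_i$ to exhibit any $A\in\mathbb{M}_n(\mathbb{C})$ as a complex combination of the $\lambda_i$. The paper stops after the spanning step (implicitly relying on the dimension count you mention to conclude the set is a basis), whereas you additionally verify $\mathbb{C}$-linear independence explicitly; this extra step is sound and makes the argument more self-contained, but it is not a different approach.
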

\begin{proof}
Taking into account the assumption, $\forall_{Q \in B_*(\mathcal{H})}$ $\exists_{\alpha_1,...,\alpha_{n^2} \in \mathbb{R}}$ such that 
\begin{equation}\label{eq:QQQ}
Q = \sum_{k=1}^{n^2} \alpha_k \lambda_k.
\end{equation}
Then from the theorem on hermitian decomposition of a complex matrix it follows that $\forall_{A \in  \mathbb{M}_n (\mathbb{C})}\text{ } \exists_{Q,R \in B_*(\mathcal{H})}$ such that the matrix $A$ can be decomposed as 
\begin{equation}
A = Q + i R.
\end{equation}
Assuming that $Q$ has such decomposition as in \eqref{eq:QQQ} and taking $R$ in form
\begin{equation}
R = \sum_{k=1}^{n^2} \beta_k \lambda_k \text{,  } \beta_k \in \mathbb{R},
\end{equation}
matrix $A$ can be represented as
\begin{equation}
A = \sum_{k=1}^{n^2} \alpha_k \lambda_k + i \left (  \sum_{k=1}^{n^2} \beta_k \lambda_k \right ),
\end{equation}
which can be transformed into the form
\begin{equation}
A = \sum_{k=1}^{n^2} \left ( \alpha_k + i \beta_l \right ) \lambda_k.
\end{equation}
Finally, the matrix $A$ can be decomposed as
\begin{equation}\label{eq:final}
A = \sum_{k=1}^{n^2} z_k \lambda_k,
\end{equation}
where $z_k \in \mathbb{C}$ and $z_k = \alpha_k + i \beta_k$.\\
From the equation \eqref{eq:final} one can easily draw a conclusion that the set of matrices $\{ \lambda_1, \lambda_2, ..., \lambda_{n^2}\}$ is a basis in $\mathbb{M}_N (\mathbb{C})$.
\end{proof}

The link between the above theorem and the stroboscopic tomography is that in the equation \eqref{eq:condition}, which expresses the necessary condition for observability, on the right hand side you can put either $B_*(\mathcal{H})$ or $B (\mathcal{H})$. On the basis of the theorem 3.2 one can conclude that if certain operators span one of these spaces, they also have to span the other.

\section{Summary}

In this paper it has been proved that any complex matrix $A \in \mathbb{M}_n (\mathbb{C})$ can be uniquely determined by two hermitian matrices (i.e. observables). In general mean values of hermitian matrices can be obtained from an experiment. Thus from this observation one can conclude that any complex matrix can be regarded as a measurable operator. The measurement of a complex matrix should be understood as the measurement of the two observables that determine the complex matrix. The measurement result of a complex matrix is a complex number which real and imaginary parts are obtained from an experiment. Further research is planned to investigate whether the generalized approach to measurable operators can improve the models of the stroboscopic tomography.

\end{document}